\newtheorem{theorem}{Theorem}
\newtheorem{corollary}[theorem]{Corollary}
\newtheorem{lemma}[theorem]{Lemma}
\newcommand{\expec}[1]{\mathbb E\left [ #1 \right ]}
\DeclareMathOperator{\polylog}{polylog}
\DeclareMathOperator{\poly}{poly}
\DeclareMathOperator{\argmax}{{argmax}}
\newcommand{\prob}[1]{\mathbb P \left [ #1 \right ]}
\newcommand{\bin}{{\mathbf{Bin}}}
\newcommand{\eat}[1]{}
\newcommand{\A}{{\cal A}}
\newcommand{\bx}{\mathbf{x}}
\newcommand{\bi}{\mathbf{i}}
\date{}
\begin{document}
\title{Densest Subgraph in Dynamic Graph Streams\thanks{University of Massachusetts, Amherst. 
\texttt{\{mcgregor,dtench,svorotni,hvu\}@cs.umass.edu}. This work was supported by NSF  Awards CCF-0953754, IIS-1251110,  CCF-1320719, and a Google Research Award.}}
\author{
Andrew McGregor
 \and 
David Tench
 \and 
Sofya Vorotnikova
 \and 
Hoa T.~Vu }
\maketitle

\begin{abstract}
In this paper, we consider the problem of approximating the densest subgraph in the dynamic graph stream model. In this model of computation, the input graph is defined by an arbitrary sequence of edge insertions and deletions and the goal is to analyze properties of the resulting graph given memory that is sub-linear in the size of the stream.
We present a single-pass algorithm that returns a $(1+\epsilon)$ approximation of the maximum density with high probability; the algorithm uses $O(\epsilon^{-2} n \polylog n)$ space, processes each stream update in $\polylog (n)$ time, and uses $\poly(n)$ post-processing time where $n$ is the number of nodes.  The space used by our algorithm matches the lower bound of Bahmani  et al.~(PVLDB 2012) up to a poly-logarithmic factor for constant $\epsilon$. The best existing results  for this problem were established recently by Bhattacharya et al.~(STOC 2015). They presented a $(2+\epsilon)$ approximation algorithm using similar space and another algorithm that both processed each update and maintained a $(4+\epsilon)$ approximation of the current maximum density in $\polylog (n)$ time per-update. 
\end{abstract}

\section{Introduction}
In the dynamic graph stream model of computation, a sequence of edge insertions and deletions defines an input graph and the goal is to solve a specific problem on the resulting graph given only one-way access to the input sequence and limited working memory. Motivated by the need to design efficient algorithms for processing massive graphs, over the last four years there has been a considerable amount of work designing algorithms in this model\cite{AhnCGMW15,AhnGM12a,AhnGM12b,AhnGM13,KapralovLMMS14,KapralovW14,GoelKP12,KutzkovP14a,GuhaMT15,BhattacharyaHNT15,ChitnisCEHMMV15,AssadiKLY15,Konrad15,BuryS15a}. Specific results include testing edge connectivity \cite{AhnGM12b} and node connectivity \cite{GuhaMT15}, constructing spectral sparsifiers \cite{KapralovLMMS14}, approximating the densest subgraph \cite{BhattacharyaHNT15}, maximum matching \cite{ChitnisCEHMMV15,AssadiKLY15,Konrad15,BuryS15a}, correlation clustering \cite{AhnCGMW15}, and estimating the number of triangles \cite{KutzkovP14a}. For a recent survey of the area, see \cite{McGregor14}. 

In this paper, we consider the densest subgraph problem.
Let $G_U$ be the induced subgraph of graph $G=(V,E)$ on nodes $U$. Then the \emph{density} of $G_U$ is defined as \[d(G_U)=|E(G_U)|/|U| \ ,\] where $E(G_U)$ is the set of edges in the induced subgraph. We define the \emph{maximum density} as 
\[
d^* = \max_{U\subseteq V} d(G_U) \ .
\]
and say that the corresponding subgraph is the \emph{densest subgraph}. The densest subgraph can be found in polynomial time 
\cite{GalloGT89,Goldberg84,Charikar00,KhullerS09} and more efficient approximation algorithms have been designed \cite{Charikar00}. Finding dense subgraphs is an important primitive when analyzing massive graphs; applications include community detection in social networks and identifying link spam on the web, in addition to applications on financial and biological data.  See \cite{densesurvey} for a survey of applications and existing algorithms for the problem.

\subsection{Our Results and Previous Work}

We present a single-pass algorithm that returns a $(1+\epsilon)$ approximation with high probability\footnote{Throughout this paper, we say an event holds with high probability if the probability is at least $1-n^{-c}$ for some constant $c>0$.}. For a graph on $n$ nodes, the algorithm uses the following resources:
\begin{itemize} 
\item {\em Space:} $O(\epsilon^{-2} n \polylog n)$. The space used by our algorithm matches the lower bound of Bahmani  et al.~\cite{BahmaniKV12} up to a poly-logarithmic factor for constant $\epsilon$. 
\item {\em Per-update time:} $\polylog (n)$. We note that this is the worst-case update time rather than amortized over all the edge insertions and deletions. 
\item {\em Post-processing time:} $\poly(n)$. This will follow by using any exact algorithm for densest subgraph \cite{GalloGT89,Goldberg84,Charikar00} on the subgraph  generated  by our algorithm.
\end{itemize}  

The most relevant previous results for the problem were established recently by Bhattacharya et al.~\cite{BhattacharyaHNT15}. They presented two algorithms that use similar space to our algorithm and process updates in $\polylog (n)$ amortized time. The first algorithm returns a $(2+\epsilon)$ approximation of the maximum density of the final graph while the second (the more technically challenging result) outputs a $(4+\epsilon)$ approximation of the current maximum density after every update while still using only $\polylog (n)$ time per-update. 
Our algorithm improves the approximation factor to $(1+\epsilon)$ while keeping the same space and update time. It is possible to modify our algorithm to output a $(1+\epsilon)$ approximation to the current maximum density after each update but the simplest approach would require the post-processing step  to be run after every edge update and this  would not be efficient.

Bhattacharya et al.~were one of the first to combine the space restriction of graph streaming with the fast update and query time requirements of fully-dynamic algorithms from the dynamic graph algorithms community. 
Epasto, Lattanzi, and Sozio \cite{EpastoLS15} present a fully-dynamic algorithm that returns a $(2+\epsilon)$ approximation of the current maximum density.
Other relevant work includes papers by Bahmani, Kumar, and Vassilvitskii~\cite{BahmaniKV12} and Bahmani, Goel, and  Munagala \cite{BahmaniGM14}. The focus of these papers is on designing algorithms in the MapReduce model but the resulting algorithms can also be implemented in the data stream model if we allow multiple passes over the data.

\subsection{Our Approach and Paper Outline}

The approach we take in this paper is as follows. In Section \ref{sec:subsample}, we show that if we sample every edge of a graph independently with a specific probability then we generate a graph that is a) sparse and b) can be used to estimate the maximum density of the original graph. This is not difficult to show but requires care since there are an exponential number of subgraphs in the subsampled graph that we will need to consider.

In Section \ref{sec:streamalg}, we show how to perform this sampling in the dynamic graph stream model. This can be done using the $\ell_0$ sampling primitive \cite{JowhariST11,CormodeF14} that enables edges to be sampled uniformly from the set of edges that have been inserted but not deleted. However, a naive application of this primitive would necessitate $\Omega(n)$ per-update processing. To reduce this to $O(\polylog n)$ we reformulate the sampling procedure in such a way that it can be performed more efficiently. This reformulation is based on creating multiple partitions of the set of edges using pairwise independent hash functions and then sampling edges within each group in the partition. The use of multiple partitions is somewhat reminiscent of that used in the Count-Min sketch \cite{CormodeM05}.

\paragraph{Remark.} Independently of our work,  Esfandiari, Hajiaghayi, and Woodruff \cite{EsfandiariHW15}  also proved a similar result to that presented in this paper. Their result is also based on uniformly sampling edges but their approach for ensuring fast update time differs and may be of independent interest.

\section{Subsampling Approximately Preserves Maximum Density}\label{sec:subsample}
In the section, we consider properties of a random subgraph of the input graph $G$.
Specifically, let $G'$ be the graph formed by sampling each edge in $G$ independently with probability $p$ where
\[p = c \epsilon^{-2} \log n \cdot \frac{n}{m}\]
for some sufficiently large constant $c>0$ and $0<\epsilon<1/2$. We may assume that $m$ is sufficiently large such that $p<1$ because otherwise we can reconstruct the entire graph in the allotted space using standard results from the sparse recovery literature \cite{GilbertI10}.

We will prove that, with high probability, the maximum density  of $G$ can be estimated up to factor $(1+\epsilon)$ given $G'$. While it is easy to analyze how the density of a specific subgraph changes after the edge sampling, we will need to consider all $2^n$ possible induced subgraphs and prove properties of the subsampling for all of them. 

The next lemma shows that $d(G'_U)$ is roughly proportional to $d(G_U)$ if $d(G_U)$ is ``large" whereas if $d(G_U)$ is ``small" then $d(G'_U)$ will also be relatively small.

\begin{lemma}
\label{lem:ld}
Let $U$ be an arbitrary set of $k$ nodes. Then, 
\begin{align*}
\prob{d(G'_U)\geq pd^*/10} & \leq  n^{-10k}    & & \mbox{if $d(G_U)\leq d^*/60$}\\
\prob{|d(G'_U)-pd(G_U)|  \geq  \epsilon p d(G_U) } & \leq  2n^{-10k}     & & \mbox{if $d(G_U)> d^*/60$} \ . 
\end{align*}

\end{lemma}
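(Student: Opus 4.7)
The plan is to reduce both tail bounds to Chernoff estimates on the sum
$X := |E(G'_U)|$, which is a sum of $|E(G_U)| = k\cdot d(G_U)$ independent
Bernoulli$(p)$ random variables, so $\E[X] = p\cdot k \cdot d(G_U)$ and
$\E[d(G'_U)] = p\, d(G_U)$. The only global fact I will need about $G$ is
that $d^* \geq m/n$ (take $U=V$), which combined with the choice
$p = c\epsilon^{-2}\log n\cdot n/m$ yields the crucial inequality
$p\, d^* \geq c\epsilon^{-2}\log n$. This is what converts the exponent in
the Chernoff bound into something of order $k\log n / \epsilon^2$, which we
can absorb against $n^{-10k}$ by choosing $c$ large.

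For the small-density case $d(G_U) \leq d^*/60$, I would use the upper-tail
Chernoff bound in the form $\Pr[X \geq t] \leq (e\mu/t)^t$ valid whenever
$t \geq e\mu$; this form is important because $\mu = p\,k\,d(G_U)$ may be
arbitrarily small (or zero), so a relative-error bound is useless here.
Setting $t = p\,k\,d^*/10$, the hypothesis gives $t \geq 6\mu$, and therefore
$\Pr[d(G'_U)\geq p d^*/10] = \Pr[X \geq t] \leq (e/6)^t$. Using
$t \geq c\,k\,\log n / (10\epsilon^2)$ and $\epsilon < 1/2$, a sufficiently
large constant $c$ makes this at most $n^{-10k}$.

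For the large-density case $d(G_U) > d^*/60$, I apply the standard
two-sided multiplicative Chernoff bound
$\Pr[|X-\mu|\geq \epsilon \mu] \leq 2\exp(-\epsilon^2\mu/3)$.
Here $\mu = p\,k\,d(G_U) \geq p\,k\,d^*/60 \geq c\,k\,\log n/(60\epsilon^2)$,
so $\epsilon^2 \mu/3 \geq c\,k\,\log n/180$, which for $c$ sufficiently large
is at least $10k\log n$, giving the desired $2n^{-10k}$.

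There is not really a hard step: the only thing to be careful about is
not to invoke a multiplicative Chernoff bound in Case 1 (where $\mu$ can
be tiny), but instead to use the $(e\mu/t)^t$ form and argue in absolute
terms using the universal lower bound $p d^* \geq c\epsilon^{-2}\log n$.
Once that is in place, both inequalities fall out by choosing the constant
$c$ in the definition of $p$ appropriately.
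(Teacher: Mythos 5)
Your proposal is correct and follows essentially the same route as the paper: both cases reduce to Chernoff bounds on $X=|E(G'_U)|$ using the key inequality $pd^*\geq c\epsilon^{-2}\log n$ (from $d^*\geq m/n$), with an absolute-threshold tail bound for the sparse case and the two-sided multiplicative bound for the dense case. The only cosmetic difference is that you use the $(e\mu/t)^t$ form of the upper tail where the paper uses the equivalent $\Pr[X\geq R]\leq 2^{-R}$ for $R\geq 6\mu$; your explicit remark about why a relative-error bound is unusable when $\mu$ is tiny is exactly the right point of care.
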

\begin{proof}
We start by considering the density of the entire graph $d(G)=m/n$ and therefore conclude that the maximum density, $d^*$, is at least $m/n$. Hence, $p\geq  (c \epsilon^{-2} \log n)/d^*$.

Let $X$ be the number of edges in $G'_U$ and note that $\expec{X}= pkd(G_U)$. First assume $d(G_U)\leq d^*/60$. Then, by an application of the Chernoff Bound (e.g., \cite[Theorem 4.4]{MitzenmacherE05}), we observe that
\begin{eqnarray*}
\prob{d(G'_U)\geq  p d^*/10} 
= \prob{X\geq  pkd^*/10} 
\leq  2^{-pkd^*/10} 
<  2^{-c k(\log n)/10} 
\end{eqnarray*}
and this is at most $n^{-10k}$ for sufficiently large constant $c$. 

Next assume $d(G_U)> d^*/60$. Hence, by an application of an alternative form of the Chernoff Bound (e.g., \cite[Theorem 4.4 and 4.5]{MitzenmacherE05}), we observe that
\begin{eqnarray*}
\prob{|d(G'_U)-pd(G_U)| \geq \epsilon pd(G_U) } &=& 
\prob{|X-pkd(G_U)|  \geq  \epsilon pkd(G_U)} \\
& \leq & 2\exp(- \epsilon^2 pkd(G_U)/3)  \\
& \leq & 2\exp(- \epsilon^2 pkd^*/180) \\
& \leq & 2\exp(- ck (\log n) /180) \ .
\end{eqnarray*}
and this is at most $2n^{-10k}$ for sufficiently large constant $c$.
\end{proof}

\begin{corollary}
\label{cor:ld}
With high probability, for all $U\subseteq V$:
\[d(G'_U)\geq (1-\epsilon) pd^* ~~ \Rightarrow~~ d(G_U)\geq \frac{1-\epsilon}{1+\epsilon} \cdot d^* \ .\]
\end{corollary}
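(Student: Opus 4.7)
The plan is to obtain the corollary from Lemma \ref{lem:ld} via a union bound over all $U \subseteq V$, handling the two regimes of $d(G_U)$ (relative to $d^*/60$) separately. The point is that the lemma's failure probability for each fixed $U$ of size $k$ is at most $2n^{-10k}$, and there are only $\binom{n}{k} \leq n^k$ subsets of that size, so summing over $k$ leaves room to spare.

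First, I would fix a subset $U$ of size $k$ and split into cases. If $d(G_U) \leq d^*/60$, then because $\epsilon < 1/2$ we have $(1-\epsilon)p d^* > p d^*/2 > p d^*/10$, so the event $d(G'_U) \geq (1-\epsilon) p d^*$ is contained in the event $d(G'_U) \geq p d^*/10$, which by the first part of Lemma \ref{lem:ld} has probability at most $n^{-10k}$. In this case, whenever the bad event fails the hypothesis $d(G'_U) \geq (1-\epsilon) p d^*$ never occurs, so the implication holds vacuously.

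If instead $d(G_U) > d^*/60$, then by the second part of Lemma \ref{lem:ld}, with probability at least $1 - 2n^{-10k}$ we have $d(G'_U) \leq (1+\epsilon) p d(G_U)$. Combining with the hypothesis $d(G'_U) \geq (1-\epsilon) p d^*$ gives
\[
(1-\epsilon) p d^* \leq d(G'_U) \leq (1+\epsilon) p d(G_U),
\]
and dividing by $(1+\epsilon) p$ yields $d(G_U) \geq \tfrac{1-\epsilon}{1+\epsilon} d^*$, as required.

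Finally I would take a union bound over all subsets. The total failure probability is at most
\[
\sum_{k=1}^{n} \binom{n}{k} \cdot 2 n^{-10k} \leq \sum_{k=1}^{n} 2 n^{-9k} = O(n^{-9}),
\]
which is $n^{-\Omega(1)}$, giving the desired high-probability statement. The only subtle point, and arguably the ``hard part'' in an otherwise mechanical argument, is noticing that the seemingly generous $n^{-10k}$ bound in Lemma \ref{lem:ld} is exactly what is needed to absorb the $\binom{n}{k} \leq n^k$ subsets of size $k$ and still leave a high-probability guarantee after summing over all $k$; this is why the lemma was stated with the $k$-dependent exponent rather than a uniform $n^{-\mathrm{const}}$ bound.
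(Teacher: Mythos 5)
Your proposal is correct and follows essentially the same route as the paper: invoke the first part of Lemma~\ref{lem:ld} to rule out the hypothesis for low-density $U$ (using $(1-\epsilon)pd^* \geq pd^*/10$), the second part to get $d(G'_U) \leq (1+\epsilon)p\,d(G_U)$ for high-density $U$, and a union bound over the $\binom{n}{k}\leq n^k$ subsets of each size $k$. The paper phrases the case split as a pair of implications rather than conditioning on which regime $d(G_U)$ falls in, but the logical content and the accounting of failure probabilities are identical.
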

\begin{proof}
There are ${n \choose k}\leq n^k$ subsets of $V$ that have size $k$. Hence, by appealing to Lemma~\ref{lem:ld} and the union bound, with probability at least $1-2n^{-9k}$, the following two equations hold,
\begin{align*}
d(G'_U) \geq  pd^*/10 
 & ~~\Rightarrow~~ d(G_U)>  d^*/60 
\\
d(G_U)>  d^*/60 
 & ~~\Rightarrow~~  d(G_U)\geq \frac{d(G'_U)}{p(1+\epsilon)} 
\end{align*}
for all $U\subseteq V$ such that $|U|=k$. Since  $(1-\epsilon) pd^*\geq pd^*/10$, together these two equations imply 
\[
d(G'_U)\geq (1-\epsilon) pd^* ~~ \Rightarrow~~ d(G_U)\geq \frac{d(G'_U)}{p(1+\epsilon)} \geq \frac{1-\epsilon}{1+\epsilon} \cdot d^*
\]
for all sets $U$ of size $k$. Taking the union bound over all values of $k$ establishes the corollary.\end{proof}

We next show that the densest subgraph in $G'$ corresponds to a subgraph in $G$ that is almost as dense as the densest subgraph in $G$.

\begin{theorem}
Let $U'=\argmax_U d(G'_U)$. Then with high probability, 
\[
 \frac{1 - \epsilon}{1+\epsilon} \cdot d^* 
\leq d(G_{U'}) \leq d^* \ .
\]
\end{theorem}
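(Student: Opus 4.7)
The plan is to derive the stated sandwich inequality by treating its two sides separately, with the upper bound essentially free and the lower bound reducing to a combination of Lemma~\ref{lem:ld} applied at a single "favorable" vertex set together with Corollary~\ref{cor:ld}. The upper bound $d(G_{U'}) \leq d^*$ is immediate from the definition $d^* = \max_{U \subseteq V} d(G_U)$, since $U'$ is a particular subset of $V$. So the real content lies in proving $d(G_{U'}) \geq \frac{1-\epsilon}{1+\epsilon}\, d^*$.

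For the lower bound, I would introduce $U^* \in \argmax_U d(G_U)$, so that $d(G_{U^*}) = d^* > d^*/60$, which puts $U^*$ squarely in the second regime of Lemma~\ref{lem:ld}. Applying that lemma to $U^*$ gives, with high probability,
\[
d(G'_{U^*}) \geq (1-\epsilon)\, p\, d(G_{U^*}) = (1-\epsilon)\, p\, d^* .
\]
Because $U' = \argmax_U d(G'_U)$, we then have $d(G'_{U'}) \geq d(G'_{U^*}) \geq (1-\epsilon)\, p\, d^*$. Now I invoke Corollary~\ref{cor:ld}, which states that with high probability, every subset $U$ satisfying $d(G'_U) \geq (1-\epsilon)\, p\, d^*$ must also satisfy $d(G_U) \geq \frac{1-\epsilon}{1+\epsilon} \, d^*$. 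Applying this to $U'$ completes the chain and yields the claimed lower bound. A final union bound across the two high-probability events (Lemma~\ref{lem:ld} at $U^*$ alone, and Corollary~\ref{cor:ld} uniformly over all $U$) delivers the high-probability statement.

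The main obstacle turns out to be conceptual rather than calculational: one must be careful that Corollary~\ref{cor:ld}'s lower bound direction is applied to $U'$, which is a \emph{random} set depending on $G'$, rather than to a fixed set. This is precisely why Corollary~\ref{cor:ld} was stated uniformly over all $U \subseteq V$ using the union bound over all ${n \choose k}$ sets of each size $k$; once we have that uniform guarantee, the random nature of $U'$ causes no issue. No further concentration arguments are needed, since the concentration of $d(G'_{U^*})$ around its mean (to anchor the chain at the \emph{true} densest subgraph) and the uniform statement of the corollary (to transport the bound back through $U'$) together close the argument.
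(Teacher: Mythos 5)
Your proposal is correct and follows essentially the same route as the paper's proof: anchor at $U^*=\argmax_U d(G_U)$ via the second case of Lemma~\ref{lem:ld}, chain through $d(G'_{U'})\geq d(G'_{U^*})\geq(1-\epsilon)pd^*$, and transfer back to $G$ via Corollary~\ref{cor:ld}. Your added remark about why the corollary must hold uniformly over all $U$ (because $U'$ is random) is a correct and worthwhile clarification of a point the paper leaves implicit.
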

\begin{proof}
Let $U^*= \argmax_U d(G_U)$. By appealing to Lemma~\ref{lem:ld}, we know that $d(G'_{U^*})\geq (1 - \epsilon) p d^*$ with high probability. Therefore 
\begin{equation*}
d(G'_{U'}) \geq d(G'_{U^*}) \geq (1 - \epsilon) p d^* \ ,
\end{equation*}
and the result follows by appealing to Corollary \ref{cor:ld}.
\end{proof}

\section{Implementing in the Dynamic Data Stream Model}\label{sec:streamalg}

In this section, we show how to sample each edge independently with the prescribed probability in the dynamic data stream model. The resulting algorithm  uses $O(\epsilon^{-2} n\polylog n)$ space. The near-linear dependence on $n$ almost matches the $\Omega(n)$ lower bound proved by Bahmani  et al.~\cite{BahmaniKV12}. The main theorem we prove is:

\begin{theorem}\label{thm:bigthm}
There exists a randomized algorithm in the dynamic graph stream model that returns a $(1+\epsilon)$-approximation for the density of the densest subgraph with high probability. The algorithm uses $O(\epsilon^{-2} n\polylog n)$ space and $O(\polylog n)$ update time. The post-processing time of the algorithm is polynomial in $n$.
\end{theorem}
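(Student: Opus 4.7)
The strategy is to implement, in the dynamic graph stream model, the edge-subsampling procedure of Section~\ref{sec:subsample} (retain every edge independently with probability $p = c\epsilon^{-2}(\log n)n/m$), recover the resulting subgraph $G'$ exactly in post-processing, and then run a polynomial-time exact densest-subgraph algorithm~\cite{GalloGT89,Goldberg84,Charikar00} on $G'$. By the theorem at the end of Section~\ref{sec:subsample} this produces a vertex set $U'$ with $d(G_{U'}) \ge \frac{1-\epsilon}{1+\epsilon}\,d^*$, which becomes a $(1+\epsilon)$-approximation after rescaling $\epsilon$ by a constant.

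\textbf{Handling the unknown $m$.} Since $p$ depends on the final edge count, which is not known in advance, I would run $O(\log n)$ parallel copies of the sampler, one per geometric guess $\hat m \in \{2^0, 2^1, \ldots, n^2\}$, each with rate $p_{\hat m} = \min(1, c\epsilon^{-2}(\log n)n/\hat m)$. An $O(\log n)$-bit counter maintains the exact net insertion-minus-deletion count, and at the end of the stream I use the copy with $\hat m \in [m, 2m]$, whose $p_{\hat m}$ matches the target rate up to a factor of $2$ that is absorbed into $c$.

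\textbf{Fast implementation for a fixed $\hat m$.} The naive implementation --- $\Theta(p_{\hat m}\hat m) = \Theta(\epsilon^{-2} n \log n)$ independent $\ell_0$-samplers --- has the right total space but forces every stream update to reach every sampler, costing $\Omega(n)$ per update. To reach $O(\polylog n)$ update time I would instead pick a hash $h \colon \binom{V}{2} \to \{0, 1, \ldots, \lceil 1/p_{\hat m}\rceil - 1\}$ and maintain a single linear $k$-sparse recovery sketch (for instance that of~\cite{GilbertI10}) with sparsity budget $k = \Theta(\epsilon^{-2} n \log n)$, accumulating only those updates $\pm e$ with $h(e) = 0$. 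Each update evaluates $h$ in $O(\polylog n)$ time and then performs at most one sketch update, so the worst-case per-update time is $O(\polylog n)$. In post-processing the sketch is decoded in $\poly(n)$ time to recover $G' = G[\{e : h(e)=0\}]$ exactly, after which the densest-subgraph solver is invoked. Summed across the $O(\log n)$ values of $\hat m$, the total space is $O(\epsilon^{-2} n \polylog n)$.

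\textbf{Main obstacle.} The delicate step is arguing that this hash-based ``keep iff $h(e)=0$'' primitive is random enough to reproduce Lemma~\ref{lem:ld}: that lemma invokes Chernoff with failure probability $n^{-10k}$ per set $U$ of size $k$, and this bound must survive a union bound over $\binom{n}{k}\le n^k$ sets and over all $k$. Pairwise independence clearly does not suffice. I would therefore strengthen $h$ to a $\Theta(\log n)$-wise independent hash (still storable and evaluable in $\polylog n$) and invoke Chernoff-type bounds for limited-independence Bernoulli sums (e.g.\ Schmidt--Siegel--Srinivasan) on each fixed $U$; alternatively, full independence can be simulated via a Nisan-style pseudorandom generator whose seed length fits within the $O(\epsilon^{-2} n \polylog n)$ space budget of the sketching algorithm. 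Either route reproduces Lemma~\ref{lem:ld} and Corollary~\ref{cor:ld} essentially verbatim and thus yields the claimed $(1+\epsilon)$-approximation.
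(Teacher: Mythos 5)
Your high-level plan (subsample edges at rate $p$, recover the subsampled graph exactly, run an exact solver on it) is the paper's plan, and you correctly isolate the crux: whatever update-time-friendly sampling primitive you use must still support Lemma~\ref{lem:ld} with failure probability $n^{-10k}$ \emph{per vertex set of size $k$}, because the union bound ranges over ${n \choose k}\le n^k$ such sets for every $k\le n$. The gap is that neither of your proposed fixes delivers this. A $\Theta(\log n)$-wise independent $h$ gives, via the Schmidt--Siegel--Srinivasan bounds, a per-set failure probability of at best $e^{-\Theta(\log n)}=n^{-O(1)}$, which is swamped by the $n^k$ sets of size $k$ as soon as $k$ exceeds a constant (and is hopeless for $k=\Theta(n)$, where you need failure probability $n^{-\Theta(n)}$). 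Recovering Lemma~\ref{lem:ld} this way requires $\Theta(k\log n)$-wise independence for sets of size $k$, hence $\Theta(n\log n)$-wise independence overall, and evaluating such a hash function (e.g.\ a degree-$\Theta(n\log n)$ polynomial) costs time linear in the degree per stream update --- exactly the cost you set out to avoid. The Nisan route fails for the dual reason: the natural distinguisher must carry the set of retained edges, so its space is $\Theta(\epsilon^{-2}n\log^2 n)$ bits; the seed still fits in memory, but extracting the pseudorandom bits assigned to a single edge means $O(\log n)$ hash applications on blocks of that size, again $\Omega(n)$ time per update. (Fooling each per-$U$ edge counter separately in $O(\log n)$ space keeps the update time down, but then the PRG error is only $n^{-O(1)}$ per test and again dies in the union bound over $2^n$ sets.) Your geometric guessing of $m$ is harmless but also unnecessary: one exact counter for $m$ suffices if the decision of which sampled edges to keep is deferred to post-processing.

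The paper escapes this tension by never letting a hash function decide which edges are kept. Its hash functions only \emph{partition} the edge set into $\epsilon^{-2}n$ groups, and for that pairwise independence genuinely suffices: Markov's inequality shows a fixed edge lands in a ``small'' group (at most $4\epsilon^2 m/n$ edges) with probability at least $1/2$ in each of $r=10\log n$ independently chosen partitions, so every edge is covered by some small group with high probability. Within each small group the inclusion decisions use fresh, fully independent randomness: draw $X_{i,j}\sim\bin(|E_{i,j}|,p)$ in post-processing and select $X_{i,j}$ edges without replacement using $\tau=O(\log n)$ independent $\ell_0$-samplers maintained per group, and a short calculation (the unnumbered lemma in Section~\ref{sec:streamalg}) shows that this two-step process reproduces i.i.d.\ sampling at rate $p$ \emph{exactly}, across groups and across partitions. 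Consequently the full-independence Chernoff bound in Lemma~\ref{lem:ld} applies verbatim, each update touches only $r=O(\log n)$ groups and hence $O(\polylog n)$ sampler instances, and no limited-independence tail bounds or pseudorandom generators are needed. To salvage your ``keep iff $h(e)=0$'' design you would need something like Siegel-style highly independent hashing with constant evaluation time, which is far heavier machinery than the problem requires.
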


To sample the edges with probability $p$ in the dynamic data stream model there are two main challenges: 
\begin{enumerate}
\item Any edge we sample during the stream may subsequently be deleted. 
\item Since $p$ depends on $m$, we do not know the value of $p$ until the end of the stream. 
\end{enumerate}
To address the first challenge, we appeal to an existing result on the $\ell_0$ sampling technique \cite{JowhariST11}: there exists an algorithm using $\polylog(n)$ space and update time that returns an edge chosen uniformly at random from the final set of edges in the graph. Consequently we may sample $r$ edges uniformly at random using $O(r\polylog n)$ update time and space. To address the fact we do not know $p$ apriori, we could set $r\gg pm=c\epsilon^{-2} n \log n$, and then, at the end of the stream when $p$ and $m$ are known a) choose $X \sim \bin(m,p)$ where  $\bin(\cdot,\cdot)$ denotes the binomial distribution and b) randomly pick $X$ distinct random edges amongst  the set of $r$ edges sampled (ignoring duplicates). This approach will work with high probability if $r$ is sufficiently large since $X$ is tightly concentrated around $\expec{X}=pm$.
However, a naive implementation of this algorithm would require $\omega(n)$ update time. The main contribution of this section is to demonstrate how to ensure $O(\polylog n)$ update time.

\subsection{Reformulating the Sampling Procedure}

We first describe an alternative sampling process that, with high probability, returns a set of edges $S$ where each edge in $S$ has been sampled independently with probability $p$ as required. The purpose of this alternative formulation is that it will allow us to argue that it can be emulated in the dynamic graph stream  model efficiently.

\paragraph{Basic Approach.} 
The basic idea is to partition the set of edges into different groups and then sample edges within groups that do not contain too many edges. We refer to such groups as ``small". We determine which of the edges in a small group are to be sampled in two steps:
\begin{itemize}
\item {\em Fix the number $X$ of edges to sample:} Let $X\sim \bin(g,p)$ where $g$ is the number of edges in the relevant group.
\item {\em Fix which $X$ edges to sample:} We then randomly pick $X$ edges without replacement from the relevant group.
\end{itemize}
It is not hard to show that this two-step process ensures that each edge in the group is sampled independently with probability $p$. At this point, the fate of all edges in small groups has been decided: they will either be returned in the final sample or definitely not returned in the final sample. 

We next consider another partition of the edges and again consider groups that do not contain many edges. We then determine the fate of the edges in such groups whose fate has not hitherto  been  determined. We keep on considering different partitions until every edge has been included in a small group and has had its fate determined. 

\begin{lemma}
Assume for every edge there exists a partition such that the edge is in a small group. Then the distribution over sets of sampled edges is the same as the distribution had each edge been sampled independently with probability $p$.
\end{lemma}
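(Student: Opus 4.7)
The plan is to reduce everything to the standard decomposition of i.i.d.\ Bernoulli trials. If $B_1, \ldots, B_g$ are i.i.d.\ $\text{Bernoulli}(p)$ and $X := \sum_i B_i$, then a one-line conditional-probability calculation shows that $X \sim \bin(g,p)$ and, conditional on $X = x$, the support $\{i : B_i = 1\}$ is uniformly distributed over the $x$-element subsets of $[g]$. In particular, generating $B$ by first drawing $X \sim \bin(g,p)$ and then choosing a uniform $X$-subset yields the same joint law as sampling each coordinate independently. This handles a single small group of $g$ as-yet-undecided edges: the two-step rule produces exactly an i.i.d.\ $\text{Bernoulli}(p)$ vector of inclusion indicators.

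To lift this to the full procedure, I would proceed by induction on the partition index. Let $D_t$ be the set of edges whose fate has been fixed after partitions $1, \ldots, t$; the invariant to maintain is that the inclusion decisions restricted to $D_t$ are jointly i.i.d.\ $\text{Bernoulli}(p)$ and independent of the fresh coins reserved for later partitions. The base case $t=0$ is vacuous. For the inductive step, each small group $A$ in partition $t+1$ is processed by applying the two-step rule to the $g' = |A \setminus D_t|$ still-undecided edges using coins drawn independently of the history. By the first paragraph, these $g'$ new decisions are i.i.d.\ $\text{Bernoulli}(p)$, and because the coins are fresh they are independent of everything already in $D_t$. Since distinct small groups within partition $t+1$ are disjoint and use independent randomness, aggregating them maintains the invariant for $D_{t+1}$.

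By the lemma's hypothesis, every edge eventually lies in some small group of some partition, so $\bigcup_t D_t = E(G)$. The joint distribution over sampled edges is therefore a product of $|E(G)|$ independent $\text{Bernoulli}(p)$ variables, matching the distribution produced by sampling each edge independently with probability $p$, as claimed.

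The only real obstacle I anticipate is bookkeeping: one must pin down that when a small group contains both already-decided and undecided edges, the two-step rule on the undecided portion is genuinely independent of the decided portion. The cleanest way to handle this is to fix, up front, an independent pool of uniform random bits for each (partition, group) pair; the inductive invariant above then follows transparently from the decomposition fact in the first paragraph.
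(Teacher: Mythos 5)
Your proposal is correct and follows essentially the same route as the paper: both arguments rest on the fact that drawing $X\sim\bin(g,p)$ and then a uniform $X$-subset reproduces $g$ i.i.d.\ $\mathrm{Bernoulli}(p)$ inclusion indicators (the paper verifies this by computing $\prob{\forall e\in Q,\ e \text{ is picked}}=p^{\ell}$ for every $\ell$-subset $Q$ of a group), combined with the observation that which edges have their fate decided in which partition is fixed by the hash functions rather than by the sampling coins. Your induction over partitions is a reorganization of the paper's direct argument over the deterministic sets $W_i=\{e: f(e)=i\}$, and your closing remark about fresh, independent randomness for each (partition, group) pair is precisely the independence the paper invokes.
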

\begin{proof}
The proof does not depend on the exact definition of ``small" and the only property of the partitions that we require is that every edge is in a small group of some partition. We henceforth consider a fixed set of partitions with this property.

We first consider the $j$th group in the $i$th partition. Let $g$ be the number of edges in this group. For any subset $Q$ of $\ell$ edges in this group, we show that the probability that $Q$ is picked by the two-step process above is indeed $p^\ell$. 
%
%
\begin{align*}
 \prob{ \forall e \in Q, \mbox{$e$ is picked} }   &= \sum_{t=\ell}^g \prob{\forall e \in Q, \mbox{$e$ is picked } | X = t} \prob{X=t} \\
& = \sum_{t = \ell}^g \frac{ { g-\ell \choose t- \ell} } {{g \choose t}} \cdot {g \choose t} \cdot p^t (1-p)^{g-t} \\
& = p^\ell \sum_{t = \ell}^g { g-\ell \choose t- \ell}  \cdot p^{t-\ell} (1-p)^{g-t} = p^\ell.
\end{align*}
and hence  edges within the same group are sampled independently with probability $p$. Furthermore, the edges in different groups of the same partition are sampled independently from each other.
 
Let $f(e)$ be the first partition in which $e$ is placed in a group that is small and let $W_i=\{e:f(e)=i\}$. 
Restricting $Q$ to edges in $W_i$ in the above analysis establishes that edges in each $W_i$ are sampled independently. Since $f(e)$ is determined by the fixed set  of partitions rather than the randomness of the sampling procedure, we also conclude that edges in different $W_i$ are sampled independently.  As we assume that every edge belongs to at least one small group in some partition, if we let $r$ be the total number of partitions, then $\{W_i \}_{i \in [r] }$ partition the set of edges $E$. Hence, all edges in $E$ are sampled independently with probability $p$.
\end{proof}

\paragraph{Details of Alternative Sampling Procedure.}
The partitions considered will be determined by pairwise independent hash functions and we will later argue that it is sufficient to consider only $O(\log n)$ partitions. Each hash function will partition the $m$ edges into $n\epsilon^{-2}$ groups. In expectation the number of edges in a group will be $\epsilon^{2} m/n$ and we define a  group to be small if it contains at most $t=4\epsilon^{2} m/n$ edges. We therefore expect to sample less than  $4p\epsilon^{2} m/n=4c\log n$ edges from a small group. We will abort the algorithm if we attempt to sample significantly more edges than this from some small group. The procedure is as follows:

\begin{itemize}
\item Let $h_1,\ldots ,h_r:{n \choose 2} \rightarrow [n\epsilon^{-2}]$ be pairwise independent hash functions where $r=10 \log n$.
\item  Each $h_i$ defines a partition of $E$ comprising of sets of the form
 \[E_{i,j}=\{e\in E: h_i(e)=j\} \ .\] 
Say $E_{i,j}$ is \emph{small} if it is of size at most $t=4\epsilon^{2}m/n$. Let $D_i$ be the set of all edges in the small sets determined by $h_i$.
\item For each small $E_{i,j}$, let 
\[X_{i,j}=\bin(|E_{i,j}|,p)\] and abort if 
\[X_{i,j} \geq \tau ~~~ \mbox{ where } ~~~ \tau =24 c\log n \ .\]
Let $S_{i,j}$ be a set of $X_{i,j}$ edges sampled without replacement from $E_{i,j}$.

\item Let $S$ be set of edges that were sampled among some $D_i$ that are not in $D_1\cup D_2 \cup \ldots \cup D_{i-1}$, i.e., edges whose fate had not already been determined.
\[
S=\bigcup_{i=1}^{r} \{e\in D_i: e\in \cup_j S_{i,j} \mbox{ and } e\not \in D_1\cup D_2 \cup \ldots \cup D_{i-1}\}
\]
\end{itemize}

\paragraph{Analysis.} There are two main things that we need to show to establish that the above process emulates our basic sampling approach with high probability. First, we will show that with high probability for every edge $e$ there exists $i$ and $j$ such that $e\in E_{i,j}$ and $E_{i,j}$ is small. This ensures that we will make a decision on whether $e$ is included in the final sample. Second, we will show that it is very unlikely we abort because some $X_{i,j}$ is too large.

\begin{lemma}
With probability at least $1-n^{-8}$, for every edge $e$ there exists $i$ such that $e\in E_{i,j}$ and $E_{i,j}$ is small. \end{lemma}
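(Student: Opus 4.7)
The plan is to fix an edge $e$ and a single hash function $h_i$, bound the probability that $e$ falls into a non-small group by a constant strictly less than $1$ using only pairwise independence, and then amplify over the $r = 10 \log n$ independent hash functions before taking a union bound over edges.

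First I would write $|E_{i, h_i(e)}| = 1 + Y_{e,i}$, where $Y_{e,i} = \sum_{e' \neq e} \mathbb{1}[h_i(e') = h_i(e)]$ counts the other edges that collide with $e$ under $h_i$. By pairwise independence and the uniformity of $h_i$ on $[n\epsilon^{-2}]$, every collision probability equals $\epsilon^2/n$, so
\[
\E[Y_{e,i}] = (m-1) \cdot \frac{\epsilon^2}{n} \leq \frac{m\epsilon^2}{n} \ .
\]
Recall that because $p < 1$ we have $m > c\epsilon^{-2} n \log n$, so $m\epsilon^2/n > c \log n \geq 1$ for the constant $c$ from Section~\ref{sec:subsample}. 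Hence $t - 1 = 4m\epsilon^2/n - 1 \geq 3 m\epsilon^2/n$, and Markov's inequality gives
\[
\Pr\bigl[|E_{i, h_i(e)}| > t\bigr] = \Pr\bigl[Y_{e,i} \geq t-1\bigr] \leq \frac{\E[Y_{e,i}]}{3 m\epsilon^2/n} \leq \frac{1}{3} \ .
\]
Note that I am deliberately avoiding a variance computation: $Y_{e,i}$ is a sum of indicators that all share the random value $h_i(e)$, so pairwise independence of $h_i$ does not give pairwise independence of the indicators, and Chebyshev would require $3$-wise independence. Markov is just enough.

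Next I would use the fact that $h_1, \ldots, h_r$ are drawn independently, so the events ``$E_{i, h_i(e)}$ is not small'' are mutually independent across $i$. Therefore the probability that edge $e$ is not placed in a small group by any of the $r = 10 \log n$ hash functions is at most $(1/3)^{10 \log n} \leq n^{-10}$ (comfortably, regardless of the logarithm base). A union bound over the at most $\binom{n}{2} \leq n^2$ edges yields a failure probability of at most $n^{-8}$, as required.

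The only real subtlety I anticipate is the one highlighted above: ensuring the argument goes through with merely pairwise-independent hash functions. Once one recognizes that Markov on the collision count $Y_{e,i}$ suffices (since $t = 4 m\epsilon^2/n$ is a constant factor above the mean collision count plus one, and $m \epsilon^2 / n$ is already $\Omega(\log n) \gg 1$), the rest of the proof is a standard amplification-plus-union-bound argument, and the chosen $r = 10\log n$ is specifically tuned to make the final bound come out as $n^{-8}$.
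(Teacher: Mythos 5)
Your proof is correct and follows essentially the same route as the paper: bound the expected size of the group containing $e$ (the paper writes $\expec{|E_{i,h_i(e)}|}\leq 1+\epsilon^2(m-1)/n\leq 2\epsilon^2 m/n$ and applies Markov to get probability $1/2$ per hash function, versus your $1/3$ on the collision count $Y_{e,i}$), then amplify over the $r=10\log n$ independent hash functions and union bound over the at most $n^2$ edges. Your explicit remark that only Markov (not Chebyshev) is available under pairwise independence is a nice clarification of why the paper's argument is structured this way, but it is not a different proof.
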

\begin{proof}
Fix $i\in [r]$ and let $j=h_i(e)$. Then $\expec{|E_{i,j}|}\leq 1+\epsilon^{2} (m-1)/n\leq 2\epsilon^{2} m/n$ assuming $m\geq \epsilon^{-2}n$. By an application of the Markov bound:
\[
\prob{|E_{i,j}|\geq 4m\epsilon^{2}/n}\leq 1/2 \ .
\] 
Since each $h_i$ is independent,
\[
\prob{|E_{i,h_i(e)}|\geq 4m\epsilon^{2}/n \mbox{ for all $i$} }\leq 1/2^r =  1/n^{10} \ .
\]  
Therefore by the union bound over all $m\leq n^2$ edges there exists a good partition for each $e$ with probability at least $1-n^{-8}$.
\end{proof}

\begin{lemma}
With high probability, all $X_{i,j}$ are less than $\tau=24 c  \log n$.
\end{lemma}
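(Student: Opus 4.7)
The plan is to apply a standard multiplicative Chernoff bound to each $X_{i,j}$ and then union bound over all $(i,j)$ pairs. The key observation is that the threshold $t$ defining ``small'' was chosen precisely so that $\mathbb{E}[X_{i,j}]$ is at most a constant multiple of $\log n$ for any small group, which gives enough room to make $\tau = 24c\log n$ a large multiplicative deviation from the mean.

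First I would bound the mean. For a small group $E_{i,j}$ we have $|E_{i,j}| \leq t = 4\epsilon^{2} m/n$, so
\[
\expec{X_{i,j}} \;=\; p\,|E_{i,j}| \;\leq\; pt \;=\; \left(c\epsilon^{-2}\log n \cdot \frac{n}{m}\right)\cdot \frac{4\epsilon^{2}m}{n} \;=\; 4c\log n.
\]
Thus $\tau = 24c\log n \geq 6\cdot \expec{X_{i,j}}$ whenever $E_{i,j}$ is small and nonempty.

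Next I would argue a tail bound. Since $X_{i,j}\sim \bin(|E_{i,j}|,p)$ is stochastically dominated by $Y\sim \bin(t,p)$ with $\expec{Y} = 4c\log n$, the multiplicative Chernoff bound (e.g., \cite[Theorem 4.4]{MitzenmacherE05}) gives
\[
\prob{X_{i,j}\geq \tau} \;\leq\; \prob{Y \geq 6\expec{Y}} \;\leq\; \left(\frac{e^{5}}{6^{6}}\right)^{4c\log n} \;\leq\; n^{-10c'}
\]
for some constant $c'>0$ once $c$ is taken sufficiently large.

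Finally, I would take a union bound over all $r \cdot n\epsilon^{-2} = O(n\epsilon^{-2}\log n)$ pairs $(i,j)$ (recalling $r = 10\log n$). Since this count is polynomial in $n$, choosing $c$ large enough makes the total failure probability at most $n^{-c''}$ for any desired constant $c''$, establishing the lemma.

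I do not expect any substantive obstacle: the only mild subtlety is that $|E_{i,j}|$ may be strictly less than $t$, which is handled by the stochastic domination step above. Everything else is a direct application of Chernoff plus a union bound, and the constants in $p$ and $\tau$ were chosen to make the arithmetic work out.
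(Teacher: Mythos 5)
Your proof is correct and follows essentially the same route as the paper: bound $\expec{X_{i,j}}\leq 4c\log n$ using the smallness of $E_{i,j}$, apply a multiplicative Chernoff bound to show $\prob{X_{i,j}\geq 24c\log n}$ is an inverse polynomial in $n$, and union bound over the $O(\epsilon^{-2}n\log n)$ pairs $(i,j)$. The stochastic-domination remark is a harmless extra precaution; the paper simply applies the Chernoff tail bound directly with the upper bound on the mean.
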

\begin{proof}
Since $E_{i,j}$ is small then $\expec{X_{i,j}}=|E_{i,j}|p \leq 4\epsilon^{2}p  m/n  = 4c \log n$. Hence, by an application of the Chernoff bound, 
\[\prob{X_{i,j}\geq 24 c \log n} \leq 2^{-24 c \log n} \leq n^{-10} \ .\] Taking the union bound over all $10\log n$ values of $i$ and $\epsilon^{-2}n$ values of $j$ establishes the lemma. 
\end{proof}

\subsection{The Dynamic Graph Stream Algorithm}

We are now ready to present the dynamic graph stream algorithm. To emulate the above sampling process in the dynamic graph stream model, we proceed as follows:
\begin{enumerate}
\item {\em Pre-Processing:} Pick the hash functions $h_1, h_2, \ldots, h_r$. These define the sets $E_{i,j}$.
\item {\em During One Pass:} 
\begin{itemize}
\item Compute the size of each $E_{i,j}$ and $m$. Note that $m$ is necessary to define $p$.
\item Sample $\tau $ edges $S'_{i,j}$ uniformly without replacement from each $E_{i,j}$. 
\end{itemize}
\item {\em Post-Processing:}
\begin{itemize}
\item Randomly determine the values $X_{i,j}$ based on the exact values of $|E_{i,j}|$ and $m$ for each $E_{i,j}$ that is small. If $X_{i,j}$ exceeds $\tau$ then abort.
\item Let $S_{i,j}$ be a random subset of $S_{i,j}'$ of size $X_{i,j}$.
\item Return $p^{-1}\max_U d(G'_U)$ where $G'$ is the graph with edges:
\[
S=\bigcup_{i=1}^{r} \{e\in D_i: e\in \cup_j S_{i,j} \mbox{ and } e\not \in D_1\cup D_2 \cup \ldots \cup D_{i-1}\}
\]
\end{itemize}
\end{enumerate}

Note that is possible to compute  $|E_{i,j}|$ using a  counter that is incremented or decremented whenever an edge $e$ is added or removed respectively that satisfies $h_i(e)=j$. We may evaluate pairwise independent hash functions in $O(\polylog n)$ time. The exact value of $\max_U d(G'_U)$ can be determined in polynomial time using the result of Charikar \cite{Charikar00}. To prove Theorem~\ref{thm:bigthm}, it remains to describe how to sample $\tau$ edges \emph{without replacement} from each $E_{i,j}$.

\paragraph{Sampling Edges Without Replacement Via $\ell_0$-Sampling.} To do this, we use the $\ell_0$-sampling algorithm of Jowhari et al.~\cite{JowhariST11}. Their algorithm returns, with high probability, a random edge from $E_{i,j}$ and the space and update time of the algorithm are both $O(\polylog n)$. Running $\tau$ independent instantiations of this algorithm immediately enables us to sample $\tau$ edges uniformly from $E_{i,j}$ \emph{with replacement}. 

However, since their algorithm is based on linear sketches, there is an elegant way (at least, more elegant than simply over sampling and removing duplicates) to ensure that all samples are distinct. Specifically, let $\bx$ be the characteristic vector of the set $E_{i,j}$. Then, $\tau$ instantiations of the algorithm of Jowhari et al.~\cite{JowhariST11} generate random projections  \[\A_1 (\bx)~, ~\A_2 (\bx)~,~ \ldots~,~ \A_\tau (\bx)\] of $\bx$ such that a random non-zero entry of $\bx$ (which corresponds to an edge from $E_{i,j}$) can be identified by processing each $\A_i (\bx)$. Let $e_1$ be the edge reconstructed from $\A_1( \bx)$. Rather than reconstructing an edge from $\A_2 (\bx )$, which could be the same as $e_1$, we instead reconstruct an edge $e_2$ from 
\[\A_2 (\bx)-\A_2 (\bi_{e_1})=\A_2 (\bx-\bi_{e_1})\] where $\bi_{e_1}$ is the characteristic vector of the set $\{e_1\}$. Note that $e_2$ is necessarily different from $e_1$ since $\bx-\bi_e$ is the characteristic vector of the set $E_{i,j}\setminus \{e_1\}$. Similarly we reconstruct $e_j$ from 
\[\A_j (\bx)-\A_j (\bi_{e_1})-\A_j (\bi_{e_2})-\ldots - \A_j (\bi_{e_{j-1}})=\A_2 (\bx-\bi_{e_1}-\ldots - \bi_{e_{j-1}})\]
and note that $e_j$ is necessarily distinct from $\{e_1, e_2,\ldots, e_{j-1}\}$.

\section{Conclusion}
We presented the first algorithm for estimating the density of the densest subgraph up to a $(1+\epsilon )$ factor  in the dynamic graph stream model. Our algorithm used $O(\epsilon^{-2} n \polylog n)$ space, $\polylog (n)$ per-update processing time, and $\poly(n)$ post-processing to return the estimate. The most relevant previous results, by Bhattacharya et al.~\cite{BhattacharyaHNT15}, were a $(2+\epsilon)$ approximation in similar space and a $(4+\epsilon)$ approximation with $\polylog (n)$ per-update processing time that also outputs an estimate of the maximum density after each edge insertion or deletion. A natural open question is whether it is possible to use ideas contained in this paper to improve the approximation factor for the problem of maintaining a running estimate of the maximum density.

{ \small
\bibliographystyle{abbrv} \bibliography{dynamic}
}
\end{document}